\documentclass[final,5p,times,twocolumn]{elsarticle}



\usepackage{amssymb}

\usepackage{amsmath}
\usepackage{mathrsfs}
\usepackage{color}
\usepackage{enumerate}
\usepackage[hidelinks]{hyperref} 

\newtheorem{theorem}{Theorem}
\newproof{proof}{Proof}
\newtheorem{remark}{Remark}
\newtheorem{assumption}{Assumption}
\newtheorem{proposition}{Proposition}


\usepackage{lineno,hyperref}
\modulolinenumbers[5]

\journal{ISA Transactions}









\bibliographystyle{elsarticle-num}

\begin{document}

\begin{frontmatter}
	


\title{Adaptive Super-twisting Observer for Fault Reconstruction \\in Electro-hydraulic Systems}




\author[mymainaddress]{Mohamad Bahrami\corref{mycorrespondingauthor}}
\cortext[mycorrespondingauthor]{Corresponding author}
\ead{mohamad.bahrami@aut.ac.ir}
\author[mymainaddress]{Mahyar Naraghi}
\ead{naraghi@aut.ac.ir}
\author[mysecondaryaddress]{Mohammad Zareinejad}
\ead{mzare@aut.ac.ir}
\address[mymainaddress]{Mechanical Engineering Department, Amirkabir University of Technology, Tehran, Iran}
\address[mysecondaryaddress]{New Technologies Research Center, Amirkabir University of Technology, Tehran, Iran}

\begin{abstract}
An adaptive-gain super-twisting sliding mode observer is proposed for fault reconstruction in electro-hydraulic servo systems (EHSS) receiving bounded perturbations with unknown bounds. The objective is to address challenging problems in classic sliding mode observers: chattering effect, conservatism of observer gains, strong condition on the distribution of faults and uncertainties. In this paper, the proposed super-twisting sliding mode observer relaxes the condition on the distribution of uncertainties and faults, and the gain adaptation law leads to eliminate observer gain overestimation and attenuate chattering effects. After using the equivalent output-error-injection feature of sliding mode techniques, a fault reconstruction strategy is proposed. The experimental results are presented, confirming the effectiveness of the proposed adaptive super-twisting observer for precise fault reconstruction in electro-hydraulic servo systems.
\end{abstract}

\begin{keyword}
Super-twisting observer\sep Gain adaptation\sep Chattering effect\sep Fault reconstruction\sep Electro-hydraulic servo systems
\end{keyword}

\end{frontmatter}


\section{Introduction}
\label{s:1}
Hydraulic systems are extensively used in industrial fields due to inherent advantages in power transmission through a pressurized fluid \cite{merritt1967hydraulic}. Their industrial applications include active suspension and force control \cite{rajamani1995adaptive,alleyne2000simplified,baghestan2014robust}, positioning \cite{yao2000adaptive,baghestan2015energy,tivay2014switched}, machine tools and manufacturing \cite{chiang2005integrated}, excavating \cite{chiang2004experimental} and flight control \cite{karpenko2009hardware}. The widespread applications and the importance of reliability and safety of hydraulic systems make the fault detection and diagnosis (FDD) an interesting field for control engineers.

The fault detection and diagnosis of electro-hydraulic servo systems (EHSS) is a generally challenging problem because of its highly uncertain nonlinear nature. This nonlinearity includes the dead-zone and hysteresis of control valves and the turbulent fluid flow equations governing the behavior of the overall system. Model uncertainties including parametric uncertainties and uncertain nonlinearities, matched/unmatched disturbances, and friction are among other types of obstacles in achieving a precise fault detection and diagnosis for hydraulic systems. Consequently, a suitable FDD algorithm needs to be proposed that takes into account model uncertainties and disturbances.

Certain efforts have been made in the literature in order to address the FDD of EHSS. These approaches include signal-based and model-based strategies \cite{gao2015survey}. Signal-based approaches including machine learning algorithms \cite{sharifi2018leakage}, the wavelet transformation \cite{goharrizi2010wavelet,goharrizi2011wavelet} and the Hilbert-Huang transformation \cite{goharrizi2012internal} have been studied to detect internal/external leakage. However, these strategies are not applicable in closed-loop tracking, mainly because closed-loop control creates correlation between plant inputs and outputs. Furthermore, signal-based approaches are dependent upon the plant receiving a specific type of input, which is not the case for a plant inside a control loop.
Among model-based approaches, FDD using the adaptive threshold \cite{shi2005development}, unknown input observer \cite{chadli2009state}, Extended Kalman Filter \cite{an2005hydraulic}, adaptive and robust observer \cite{rajamani1995adaptive,hammouri2002failure} and parameter estimation \cite{mahulkar2011derivative} methods have been studied. Nevertheless, these methods have the common disadvantage of sensitivity to unmodeled dynamics. 

Over the past two decades, sliding mode observers for fault reconstruction based on the concept of the so-called equivalent output injection have been proposed \cite{edwards2000sliding,yan2007nonlinear,yan2008robust}. In these studies, a linear/nonlinear system is transformed into a new form with two separate subsystems including the dynamics of unmeasurable and measurable states. After this transformation, a reduced-order Luenberger observer is designed for the first subsystem (i.e. unmeasurable states) and sliding mode observer is proposed for observation of measurable states. Then, system faults associated with the second subsystem are reconstructed through the equivalent output injection. Nevertheless, in many practical applications this strategy suffers from the following:\\
\begin{itemize}
	\item Chattering is a common problem in standard sliding mode observers/controllers that needs to be addressed.
	\item Bounds of the system uncertainties and faults need to be known for observer design.
	\item This method has a challenging problem to reconstruct faults on the first part i.e. unmeasurable states. This is due to the fact that the mentioned method requires a strong condition on the distribution of system uncertainties and faults to hold.
\end{itemize}

This paper addresses these challenging problems. Using second-order sliding mode techniques is a well-known strategy to counteract the chattering effect. Nevertheless, these methods require the time derivative of the sliding variable for their realization. In contrast, the well-known super-twisting algorithm \cite{levant1993sliding} can be realized using only the sliding variable itself. In order to establish the sliding motion, there is no need to know perturbation/fault bounds in the observer design. The gains of the super-twisting algorithm are chosen only in accordance with bounds on the gradient of the perturbation. In practical applications, this bound cannot be effortlessly estimated. As a result, the overestimation of perturbation bounds imposes a conservative choice for super-twisting observer gains and exacerbates the chattering. 

This paper presents a novel adaptive super-twisting (ASTW) observer for fault reconstruction (i.e. internal and external leakages), which takes into account the uncertainties and nonlinearities of the EHSS. Adaptive gains handle the perturbed EHSS with additive perturbations (uncertainties and faults) in which the perturbation bounds are unknown but bounded. These gains dynamically increase until the system states reach sliding motion and then start to reduce towards lower values. The mentioned procedure is repeated whenever the sliding variable or its derivative start to deviate from the sliding manifold. This strategy eliminates the gain overestimation as well as chattering. Furthermore, simultaneous state estimation and fault reconstruction for the EHSS is based on an extended form of previous works \cite{edwards2000sliding,yan2007nonlinear,yan2008robust}. This consideration results in the constraint on the distribution of the perturbation of unmeasurable states being relaxed. In this case, both matched and unmatched disturbances/faults on the EHSS mechanical part can be reconstructed. Finally, the boundedness of adaptive gains has been proven and the finite convergence time is estimated. The stability proof is motivated by recently proposed Lyapunov function \cite{moreno2008lyapunov,shtessel2012novel}.

The rest of this paper is organized as follows: In Section \ref{s:2}, a detailed nonlinear mathematical model of the presented EHSS is described. Design of adaptive super-twisting observer (ASTW), the finite-time stability proof and the proof of the adaptive gains boundedness are presented in Section \ref{s:3}. Section \ref{s:4} gives the fault reconstruction strategy. In Section \ref{s:5}, the experimental set-up and the implementation approach are described. The experimental results are also discussed and the efficacy of the proposed strategy is confirmed.

\section{System modeling and problem statement}
\label{s:2}
Consider the nonlinear dynamic model of the EHSS shown in Fig. \ref{EHSS-scheme}. It is composed of a fixed-displacement hydraulic pump, a proportional relief valve (PRV), a proportional directional valve (PDV), and a double acting cylinder. The mathematical modeling of this system is presented as follows:

\begin{figure}[h]
	\centering
	\includegraphics[width=0.35\textwidth]{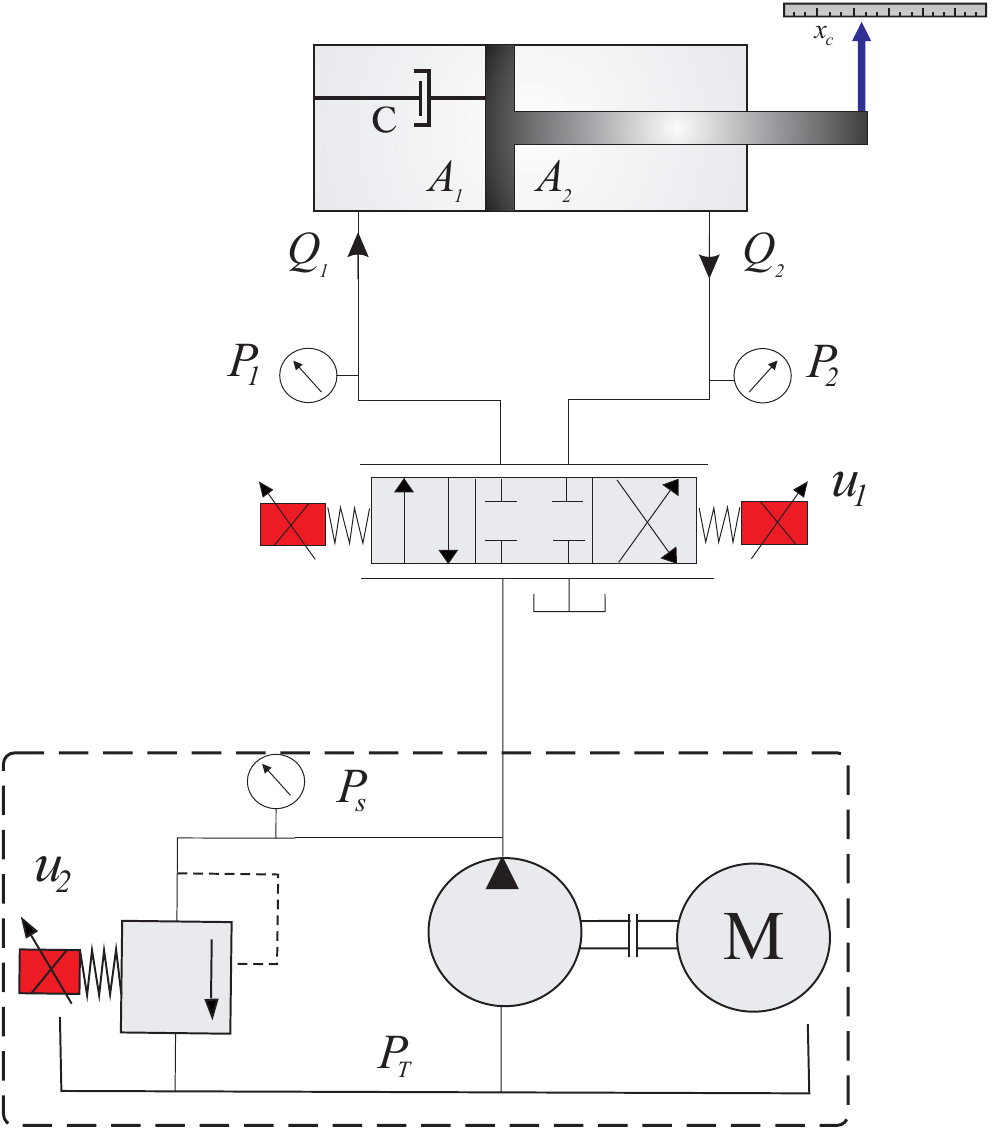}
	\caption{Schematic diagram of the hydraulic system.}\label{EHSS-scheme}
\end{figure}

The proportional directional valve (PDV) model can be described as a first-order system given by:
\begin{equation}\label{eq:valveDyn}
\tau_v\dot{x}_v=-{x}_v+K_v u_1
\end{equation}
where $ x_v $ denotes the spool position, $\tau_v$ is the spool time constant, $ K_v $ and $ u_1 $ are the valve gain and the input voltage of the PDV, respectively.

The pressure dynamics of actuator chambers can be derived as follows: \cite{merritt1967hydraulic}:
\begin{eqnarray}\label{Valvepressure}
\dot{P}_1 = \frac{\beta}{V_{01}+A_1x_c}(+Q_1-A_1\dot{x}_c+Q_{L1}) \\
\dot{P}_2 = \frac{\beta}{V_{02}-A_2x_c}(-Q_2+A_2\dot{x}_c+Q_{L2})
\end{eqnarray}
in which $ \beta $ is the fluid effective bulk modulus, $ V_{01}+A_1x_c $ and $ V_{02}-A_2x_c $ are the volumes of actuator chambers, $ x_c $ and $ \dot{x}_c $ are the position and velocity of the actuator. Terms of $ Q_1 $ and $ Q_2 $ represent flow through PDV orifices, $ Q_{L1} $ and $ Q_{L2} $ represent actuator leakages which can be formulated as follows:
\begin{eqnarray}\label{eq:Valveflow}
& Q_1 =
\begin{cases}
C_d w x_v \sqrt{\frac{2}{\rho}(P_s-P_1)} &  x_v \geq 0\\
C_d w x_v \sqrt{\frac{2}{\rho}(P_1-P_T)} & x_v<0
\end{cases} \\
& Q_2=
\begin{cases}
C_d w x_v \sqrt{\frac{2}{\rho}(P_2-P_T)} &  x_v \geq 0 \\
C_d w x_v \sqrt{\frac{2}{\rho}(P_s-P_2)} & x_v<0
\end{cases} \\ 
& Q_{L1} = +C_{i}(P_2-P_1)-C_{e1}(P_1-P_T) \\
& Q_{L2} = -C_{i}(P_2-P_1)-C_{e2}(P_2-P_T)
\end{eqnarray}
with $ C_d $ the discharge coefficient, $ w $ the valve orifice area gradient, $ \rho $ the fluid density, $ P_s $ the supply pressure, $ P_T $ the tank pressure, $ P_1 $, $ P_2 $ the piston and rod side pressures of the cylinder, respectively, $C_i$ the internal leakage coefficient, $C_{e1}$ and $C_{e2}$ being the external leakage coefficients.

The dynamic model of the actuator(cylinder) can be described as:
\begin{equation}\label{eq:cylinder}
m\ddot{x}_c+c\dot{x}_c=P_1A_1 - P_2A_2 + f_d
\end{equation}
where $ m $ and $ c $ are the equivalent mass and the damping coefficient of the environment respectively. $ f_d $ denotes any unknown-but-bounded disturbance and friction force including stiction and Coulomb.

A proportional relief valve (PRV) is considered to control the supply pressure at the location of the pump in accordance with actuator demands. It is worth noting that the importance of controllable supply pressure for the aim of high accuracy positioning and increased efficiency of hydraulic actuators has been addressed by \cite{baghestan2015energy,bahrami2016energy} and \cite{tivay2013energy}. Neglecting the blowdown of the relief valve \cite{baghestan2015energy}, the supply pressure $ P_s $ is related to the PRV control input $ u_2 $ by a first-order system given by: 
\begin{equation}\label{eq:reliefPressure}
\tau_s\dot{P}_s=-{P}_s+K_r u_2
\end{equation}
where $\tau_s$ is the time constant of PRV, $ K_r $ and $ u_2 $ are the valve gain and the input voltage of the PRV, respectively.

Considering equations (\ref{eq:valveDyn}) to (\ref{eq:reliefPressure}) and defining state variables $ x_1=x_v $, $ x_2=P_1 $, $ x_3=P_2 $, $ x_4=P_s $, $ x_5=x_c $, $ x_6=\dot{x}_c $, the nonlinear dynamics of the EHSS can be expressed as:
\begin{equation}\label{eq:nonlinearDyn}
\begin{array}{l}
\vspace{5pt} \dot{x}_1=-\frac{1}{\tau_v}x_1+\frac{K_v}{\tau_v}u_1  \\
\vspace{5pt} \dot{x}_2=\frac{\beta}{V_{01}+A_1x_5}\left({+C_d w x_1 \sqrt{\frac{2}{\rho}(\Delta P_1)}-A_1x_6+Q_{L1}(x_2,x_3)}\right)  \\
\vspace{5pt} \dot{x}_3=\frac{\beta}{V_{02}-A_2x_5}\left({-C_d w x_1 \sqrt{\frac{2}{\rho}(\Delta P_2)}+A_2x_6+Q_{L2}(x_2,x_3)}\right)  \\
\vspace{5pt} \dot{x}_4=-\frac{1}{\tau_s}x_4+\frac{K_r}{\tau_s}u_2+\Delta(x)  \\
\vspace{5pt} \dot{x}_5=x_6  \\
\vspace{5pt} \dot{x}_6=-\frac{c}{m}x_6+\frac{A_1}{m}x_2-\frac{A_2}{m}x_3+\frac{f_d}{m}\\
y = [x_2,x_3,x_4,x_5]^{T}
\end{array}
\end{equation}
in which $ y $ denotes measurable states, and pressure differences depend on the position of the PDV spool:\\
For 
$ x_1 \geq 0  $ :
$ \begin{cases}
\Delta P_1 = x_4-x_2 \\
\Delta P_2 =  x_3-P_T
\end{cases} $
\\and\\ for
$ x_1 < 0  $ :
$  \begin{cases}
\Delta P_1 = x_2-P_T  \\
\Delta P_2 = x_4-x_3
\end{cases} $
\\
The unknown-but-bounded term $ \Delta(x) $ denotes the unmodeled dynamics and uncertain nonlinearities of supply pressure dynamics.

Given the nonlinear dynamics of EHSS (\ref{eq:nonlinearDyn}) with a variable structure in the pressure dynamics of cylinder chambers, the main sources of system nonidealities include leakage faults due to breakdown of seals, external disturbances on the cylinder, and uncertainties in supply pressure dynamics. Considering these aspects, it is evident that a suitable observer needs to be proposed that is robust to mentioned system nonidealities.
The development of adaptive super-twisting observer will be described in the following section.
\section{Adaptive super-twisting observer design}
\label{s:3}
The nonlinear system (\ref{eq:nonlinearDyn}) includes the unavailable state ($ x_1 $), measurable states ($ \vec{y} $) and a second-order system (cylinder dynamics). Our objective is to reorder the states so that the system is turned into a suitable form for super-twisting observer design. With this in mind, the system (\ref{eq:nonlinearDyn}) can be rewritten as:
\begin{equation}\label{eq:nonlinearDyn2}
\begin{array}{l}
\hspace{12pt} \vspace{5pt}\dot{z}_1=-\frac{1}{\tau_v}z_1+\frac{K_v}{\tau_v}u_1 \\
\vspace{5pt} \left\{\begin{array}{l}\vspace{5pt}
\dot{y}_1= \phi_1(y)\left({+Q_1(z,y)-A_1z_2}\right)+\phi_1(y)f_1 \\
\vspace{5pt} \dot{y}_2=\phi_2(y)\left({-Q_2(z,y)+A_2z_2}\right)+\phi_2(y)f_2 \\
\vspace{0pt} \dot{y}_3=-\frac{1}{\tau_s}y_3+\frac{K_r}{\tau_s}u_2 +\Delta(x)\\
\end{array}\right.\\
\left\{\begin{array}{l}
\vspace{5pt} \dot{y}_4=z_2  \\
\vspace{0pt} \dot{z}_2=-\frac{c}{m}z_2+\frac{A_1}{m}y_1-\frac{A_2}{m}y_2+\frac{f_d}{m}
\end{array}\right.
\end{array}
\end{equation}
where $ z=col(z_1=x_1,z_2=x_6) $ denotes unavailable states and $ y=col(y_1,y_2,y_3,y_4) $ denotes measurable states, $ \phi_1 = \frac{\beta}{V_{01}+A_1y_4}  $ , $ \phi_2 = \frac{\beta}{V_{02}-A_2y_4}  $, $ f_1=Q_{L1}(y_1,y_2) < f_{1_{max}}$ and $ f_2=Q_{L2}(y_1,y_2) < f_{2_{max}}$ are system faults (leakages in cylinder chambers).

The ASTW observer is represented by the following dynamical system
\begin{equation}\label{eq:stwObs}
\begin{array}{l}
\vspace{5pt} \dot{\hat{z}}_1=-\frac{1}{\tau_v}\hat{z}_1+\frac{K_v}{\tau_v}u_1 \\
\vspace{5pt} \dot{\hat{y}}_1=\phi_1(y)\left({+Q_1(\hat{z},y)-A_1\hat{z}_2}\right) + \mu(y_1,\hat{y}_1)\\
\vspace{5pt} \dot{\hat{y}}_2=\phi_2(y)\left({-Q_2(\hat{z},y)+A_2\hat{z}_2}\right) + \mu(y_2,\hat{y}_2)\\
\vspace{5pt} \dot{\hat{y}}_3=-\frac{1}{\tau_s}y_3+\frac{K_r}{\tau_s}u_2 + \mu(y_3,\hat{y}_3) \\
\vspace{5pt} \dot{\hat{y}}_4=\hat{z}_2 +L_1(t) \mu_1(y_4,\hat{y}_4)  \\
\vspace{5pt} \dot{\hat{z}}_2=-\frac{c}{m}\hat{z}_2+\frac{A_1}{m}y_1-\frac{A_2}{m}y_2+\frac{f_d}{m} +L_2(t) \mu_2(y_4,\hat{y}_4)
\end{array}
\end{equation}
\begin{equation}
\hat{y}=[\hat{x}_2,\hat{x}_3,\hat{x}_4,\hat{x}_5]^T
\end{equation}
in which $ \mu(\cdot) $ is the so-called output error-injection term such that:
\begin{equation}\label{eq:output_inj}
\left\{
\begin{array}{l}
\mu(\sigma) \hspace{4pt}=L_1(t) \mu_1(\sigma)+ \int_{0}^{t} L_2(t)\mu_2(\sigma) d\tau  \\
\mu_1(\sigma) = |\sigma|^{\frac{1}{2}}sign(\sigma)  \\
\mu_2(\sigma) = sign(\sigma)
\end{array}
\right.
\end{equation}
where $ \sigma $ denotes the sliding variable and adaptive gains $ L_1(t) $ and $ L_2(t) $ are to be determined.  
\begin{remark}
			Consider Eq. (\ref{eq:output_inj}) as the observer effort, it generates the continuous signal which drives the sliding variable and its derivative to zero in finite time in the presence of unknown-but-bounded perturbations. Since term $ \mu(\sigma) $ contains a continuous function $ L_1(t) |\sigma|^{\frac{1}{2}}sign(\sigma) $ and a discontinuous function under the integral $ L_2(t) \int_{0}^{t} sign(\sigma) d\tau $, chattering is attenuated but is not eliminated. Furthermore, the adaptation of observer gains ($ L_1(t) $ and $ L_2(t) $) improves the chattering reduction because it eliminates the conservatism of observer gains.
\end{remark}

It is worth noting that the presented system in form (\ref{eq:nonlinearDyn2}) is an extension of previous works \cite{yan2007nonlinear,yan2008robust}. In mentioned studies, the system falls into two separate subsystems including the dynamics of measurable and unmeasurable states that each subsystem contains a set of first-order systems. Finally, a first-order sliding mode observer is used for state observation and fault reconstruction. In proposed form (\ref{eq:nonlinearDyn2}), the second-order system (cylinder dynamics) is considered as a separate subsystem and using ASTW observer (\ref{eq:stwObs}) provides the reconstruction of disturbances/faults on the unmeasurable part (cylinder velocity). Consequently, there is no need for any conditions on the distribution of system faults. Although in previous studies, this requirement is  necessary. 
\begin{remark}
	The presented form (\ref{eq:nonlinearDyn2}-\ref{eq:stwObs}) can be generalized for a class of nonlinear systems for the objective of nonlinear observer design and/or fault reconstruction. Some of these nonlinear systems are robot manipulators powered by electrical, hydraulic or pneumatic drive systems, turbocharged Engines, PEM fuel cell air-feed systems, switching power converters, etc. 
	This generalization will be studied and presented in future work.
\end{remark}

Let the observation errors be: $ e_{z_1}=z_1-\hat{z}_1 $, $ e_{y_1}=y_1-\hat{y}_1 $, $ e_{y_2}=y_2-\hat{y}_2 $, $ e_{y_3}=y_3-\hat{y}_3 $, $ e_{y_4}=y_4-\hat{y}_4 $, $ e_{z_2}=z_2-\hat{y}_2 $. Subtracting (\ref{eq:stwObs}) from (\ref{eq:nonlinearDyn2}), the observation error dynamical equation can be described by:
\begin{align}\label{eq:obsE1}
& \dot{e}_{z_1}=-\frac{1}{\tau_v}e_{z_1} \\ \label{eq:obsY1}
& \dot{e}_{y_1}=\tilde{G}_1+\phi_1(y)f_1-\mu(y_1,\hat{y}_1) \\ \label{eq:obsY2}
& \dot{e}_{y_2}=\tilde{G}_2+\phi_2(y)f_2- \mu(y_2,\hat{y}_2) \\ \label{eq:obsY3}
& \dot{e}_{y_3}=- \mu(y_3,\hat{y}_3)+\Delta(x) \\ \label{eq:obsY4}
& \dot{e}_{y_4}=-L_1(t) \mu_1(y_4,\hat{y}_4)+e_{z_2}  \\ \label{eq:obsV}
& \dot{e}_{z_2} =-L_2(t)\mu_2(y_4,\hat{y}_4)+\varrho_4(e_{z_2},t)
\end{align}
where $ \varrho_4(e_{z_2},t) =-\frac{c}{m}e_{z_2}+\frac{f_d}{m} $,\\
$ \tilde{G}_1(z,\hat{z},y)=\phi_1(y)\left({+Q_1(z,y)-Q_1(\hat{z},y)-A_1e_{z_2}}\right) $ and\\
$ \tilde{G}_2(z,\hat{z},y)=\phi_2(y)\left({-Q_2(z,y)+Q_2(\hat{z},y)+A_2e_{z_2}}\right) $.

The stability of dynamic error systems (\ref{eq:obsE1}-\ref{eq:obsV}) can be split into two parts including the stability of unmeasurable part (\ref{eq:obsE1}) and the other parts (\ref{eq:obsY1}-\ref{eq:obsV}).\\
In the first step, the stability of the unmeasurable part (\ref{eq:obsE1}) is considered. It is obvious that the scaler system (\ref{eq:obsE1}) is isolated from the other parts and is exponentially stable. Consequently, $ \lim_{t\rightarrow\infty} e_{z_1}(t) = 0 $ and $ e_{z_1},\dot{e}_{z_1} \in \mathcal{L}^{\infty} $

In the second step, the stability of the other parts is performed as follows. Bearing in mind that the cylinder position, the PDV spool, and fluid flow passing through the PDV are physically bounded. It can be easily shown that the nonlinear terms $ \tilde{G}_1 $ and $ \tilde{G}_2 $ are globally bounded by
\begin{align}\label{eq:lips}
& \lvert\tilde{G}_1(z,\hat{z},y)\rvert \leq C_d w\sqrt{\frac{2}{\rho}}\bar{\phi}_1 \lvert e_{z_1} \sqrt{\Delta P_1} \rvert  +A_1\bar{\phi}_1\lvert e_{z_2} \rvert = |{^1 \varrho_1}(z,\hat{z},y)| \leq {^1 \delta_1}  \nonumber \\
& \lvert\tilde{G}_2(z,\hat{z},y)\rvert \leq C_d w\sqrt{\frac{2}{\rho}}\bar{\phi}_2 \lvert e_{z_1} \sqrt{\Delta P_2} \rvert +A_2\bar{\phi}_2\lvert e_{z_2} \rvert= |{^1 \varrho_2}(z,\hat{z},y)| \leq {^1 \delta_2}
\end{align}
where $ \bar{\phi}_1 $ and $ \bar{\phi}_2 $ are upper bounds on $ \phi_1(y) $ and $ \phi_2(y) $, respectively. $ {^1 \delta_1} $ and $ {^1 \delta_2} $ are unknown positive constants\footnote{The notations $ {^1 \delta_i} $ and $ {^2 \delta_i} $ indicate the bound $ ^1 \delta $ on the $ i $th nonlinear term and the bound $ ^2 \delta $ on the time-derivative of the $ i $th nonlinear term, respectively}. The reader refer to \cite{moreno2008lyapunov} for more information about necessary conditions on perturbation bounds in super-twisting algorithm.

\begin{assumption}\label{assum1}
	The perturbation $ \varrho_4(e_{z_2},t) $, uncertainty $ \Delta(x) $, and the time derivatives of nonlinear terms $ \phi_1(y)f_1 $ and $ \phi_2(y)f_2 $ , in the error dynamics (\ref{eq:obsY1}-\ref{eq:obsV}) are bounded.
	\begin{align}\label{eq:bound}
	& |\varrho_4(e_{z_2},t) | \leq {^2 \delta_4} \nonumber \\
	& | \Delta(x) | \leq {^1 \delta_3} \nonumber \\
	& \lvert\frac{d}{dt}\left(\phi_1(y)f_1\right)\rvert =|{^2 \varrho_1}(e_{z_2},y)| \leq {^2 \delta_1} \nonumber \\
	& |\frac{d}{dt}\left(\phi_2(y)f_2\right)|=|{^2 \varrho_2}(e_{z_2},y)| \leq {^2 \delta_2} 
	\end{align}
	where $ {^2 \delta_1},{^2 \delta_2},{^2 \delta_4}, {^1 \delta_3} $ are unknown positive constants.
\end{assumption}
\begin{remark}
	The nonlinear system (\ref{eq:nonlinearDyn2}) and the ASTW observer (\ref{eq:stwObs}) are bounded-input bounded-state (BIBS) in finite time, since the system (\ref{eq:nonlinearDyn2}) is a physical system \cite{perruquetti2002sliding} and it is well-known in literature that hydraulic systems have bounded state for bounded input \cite{merritt1967hydraulic}. With this in mind, it is reasonable to assume that, all of the forces acting on the cylinder are bounded. Therefore, the cylinder acceleration and velocity are bounded and the first condition of Assumption \ref{assum1} hold. Notice that system faults (internal and external leakages) and their derivatives are physically bounded. Therefore, the last two conditions of Assumption \ref{assum1} also hold on the overall region of the system behavior.  
\end{remark}
For the system (\ref{eq:obsY1}-\ref{eq:obsV}), consider the following sliding surface:
\begin{equation}\label{eq:slidingsurf}
\boldsymbol{\sigma} = \{col(e_{z_1},e_{z_2},e_{y_1},e_{y_2},e_{y_3},e_{y_4})|e_{y_i}=0 , i=1:4\}
\end{equation}
\begin{theorem}\label{Theorem1}
	Consider the dynamic of observation error (\ref{eq:obsY1}-\ref{eq:obsV}) and suppose that (\ref{eq:lips}) and (\ref{eq:bound}) hold, Then, for any initial conditions $ e_{y_i}(0),e_{z_2}(0),\boldsymbol{\sigma(0)} $, 
	if the adaptive gain $ L_1(t) $ satisfies the following condition\footnote{The index $ i $ is associated with bounds in (\ref{eq:bound}). This index means that, there exist 4 bounds (Eq. \ref{eq:condition}) and 4 adaption laws (Eqs. \ref{eq:adapgain}) for each subsystems in the dynamic of observation error (\ref{eq:obsY1}-\ref{eq:obsV}).}
	\begin{equation}
	\begin{array}{l}\label{eq:condition}
	L_1 > \frac{\left[2\lambda_1(\lambda_1-{^1 \delta_i})+(\lambda_2+{^2 \delta_i})\right]^2}{4\lambda_1\lambda_2}+\\
	\hspace{100pt}\frac{4\lambda_1(2\lambda_1-{^2 \delta_i})+\lambda_2{^1 \delta_i}}{2\lambda_2} , i=1:4
	\end{array}
	\end{equation}
	there exist a finite time $ t_f>0 $ and a parameter $ \epsilon_i $, so that the trajectories of the observation error system (\ref{eq:obsY1}-\ref{eq:obsV}) can be driven to the sliding surface (\ref{eq:slidingsurf}) with the adaptation law given by:
	\begin{align}\label{eq:adapgain}
	&\dot{L}_1 =
	\begin{cases}
	\alpha_1\sqrt{\frac{\varGamma_1}{2}}sign(|\sigma|-\epsilon_i) & L_1 > \underline{L}  \nonumber \\ 
	\bar{L}       &  L_1 \leq \underline{L}
	\end{cases} ,\quad  (L_1(0) > \underline{L})\\
	&  L_2=\lambda_1L_1
	\end{align}
	in which $ \alpha_1, \varGamma_1, \epsilon_i, \lambda_1, \lambda_2, \bar{L}, \underline{L} $ are arbitrary positive constants, among which $ \underline{L} $ can be arbitrarily small.
\end{theorem}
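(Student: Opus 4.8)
The plan is to reduce each scalar error subsystem in (\ref{eq:obsY1})--(\ref{eq:obsV}) to the canonical super-twisting form and then apply a Moreno-type quadratic Lyapunov argument, with the adaptive law (\ref{eq:adapgain}) forcing $L_1$ past the threshold (\ref{eq:condition}) in finite time. For the pair $(e_{y_4},e_{z_2})$ the system (\ref{eq:obsY4})--(\ref{eq:obsV}) is already in super-twisting form with sliding variable $\sigma=e_{y_4}$, auxiliary state $e_{z_2}$, and a perturbation $\varrho_4$ acting on the second channel bounded by ${}^{2}\delta_4$. For $i=1,2,3$ I would differentiate the integral term inside $\mu$ in (\ref{eq:output_inj}) and introduce $\eta_i := \psi_i - \int_0^t L_2\,\mathrm{sign}(\sigma_i)\,d\tau$, where $\psi_i$ collects the lumped perturbation ($\tilde G_1+\phi_1 f_1$, $\tilde G_2+\phi_2 f_2$, or $\Delta(x)$). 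Each subsystem then reads $\dot\sigma_i=-L_1\mu_1(\sigma_i)+\eta_i$ and $\dot\eta_i=-L_2\,\mathrm{sign}(\sigma_i)+\dot\psi_i$, so that $|\psi_i|\le{}^{1}\delta_i$ and $|\dot\psi_i|\le{}^{2}\delta_i$ follow from (\ref{eq:lips}) and Assumption~\ref{assum1}; the exponential decay of $e_{z_1}$ established for (\ref{eq:obsE1}) is used to control the part of $\dot\psi_i$ coming from $\tilde G_i$.

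Second, I would introduce $\zeta_i=[\,\mu_1(\sigma_i),\,\eta_i\,]^{T}=[\,|\sigma_i|^{1/2}\mathrm{sign}(\sigma_i),\,\eta_i\,]^{T}$ and the candidate $V_i=\zeta_i^{T}P\zeta_i$, with $P=P(\lambda_1,\lambda_2)$ symmetric positive definite and parameterised by the constants $\lambda_1,\lambda_2$ of (\ref{eq:condition}), which are tied to the gain ratio $L_2=\lambda_1L_1$ of (\ref{eq:adapgain}). Since $\sqrt{V_i}$ is homogeneous of degree one in $\zeta_i$ and $\lambda_{\min}(P)|\sigma_i|\le\lambda_{\min}(P)\|\zeta_i\|^2\le V_i$, the constant $\varGamma_1$ in the adaptation law can be identified with $\lambda_{\min}(P)$, linking the Lyapunov level to $|\sigma_i|$. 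Differentiating along the trajectories gives $\dot V_i=\tfrac{1}{|\sigma_i|^{1/2}}\zeta_i^{T}(A^{T}P+PA)\zeta_i+(\text{perturbation terms})$, where $A$ depends affinely on $L_1$ and $L_2$; bounding the perturbation terms through ${}^{1}\delta_i,{}^{2}\delta_i$ collapses the estimate to $\dot V_i\le-\tfrac{1}{|\sigma_i|^{1/2}}\zeta_i^{T}Q_i\zeta_i$ with $Q_i=Q_i(L_1,{}^{1}\delta_i,{}^{2}\delta_i;\lambda_1,\lambda_2)$ after substituting $L_2=\lambda_1L_1$.

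Third, a Schur-complement (determinant) evaluation of $Q_i>0$ should reproduce exactly the algebraic threshold on $L_1$ stated in (\ref{eq:condition}); once $L_1$ exceeds it, $Q_i$ is positive definite and $\dot V_i\le-\tfrac{\lambda_{\min}(Q_i)}{|\sigma_i|^{1/2}}\|\zeta_i\|^{2}\le-\gamma_i\sqrt{V_i}$ for some $\gamma_i>0$, where the last step uses $|\sigma_i|^{1/2}\le\|\zeta_i\|\le\sqrt{V_i/\lambda_{\min}(P)}$. The comparison lemma then yields reachability of $\sigma_i=0$ in a time bounded by $2\sqrt{V_i(0)}/\gamma_i$. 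To remove the need to know the bounds, I would close the argument with the adaptive law: while $|\sigma_i|>\epsilon_i$ the gain obeys $\dot L_1=\alpha_1\sqrt{\varGamma_1/2}>0$, so $L_1$ grows and must cross the finite (though unknown) threshold (\ref{eq:condition}) in finite time, after which the estimate above drives $|\sigma_i|$ below $\epsilon_i$, while the reset to $\bar L$ whenever $L_1\le\underline L$ keeps $L_1$ positive. Taking $t_f$ as the largest of the four reaching times and $\epsilon_i$ as the width of the resulting boundary layer establishes convergence to (\ref{eq:slidingsurf}).

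I expect the main obstacle to be twofold. The adaptive, time-varying gains make $P$ (through $A$ and $Q_i$) depend on $t$ via $L_1(t)$, so $\dot V_i$ acquires an additional term proportional to $\dot L_1$; I would have to show this term is either sign-favourable or dominated by $-\gamma_i\sqrt{V_i}$ so that finite-time convergence survives. The second difficulty is algebraic but essential: verifying that the positivity of $Q_i$ collapses to precisely the right-hand side of (\ref{eq:condition}) — with the correct grouping of $\lambda_1,\lambda_2,{}^{1}\delta_i,{}^{2}\delta_i$ — and that a single threshold, read through the index $i$, simultaneously covers the two structurally different cases, namely the subsystems (\ref{eq:obsY1})--(\ref{eq:obsY3}) carrying a perturbation on both channels and the pair (\ref{eq:obsY4})--(\ref{eq:obsV}) carrying it on the second channel only, including the exponentially vanishing coupling through $e_{z_1}$ in $\tilde G_1,\tilde G_2$.
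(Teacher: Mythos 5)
Your reduction to the canonical super-twisting form, the choice of state $\zeta=[\,|\sigma|^{1/2}\mathrm{sign}(\sigma),\,\eta\,]^{T}$, the quadratic form $\zeta^{T}P\zeta$ with $P=P(\lambda_1,\lambda_2)$, and the derivation of (\ref{eq:condition}) from positive definiteness of the resulting matrix (the paper's $\Omega$) after substituting $L_2=\lambda_1 L_1$ all coincide with the paper's proof, including your observation that (\ref{eq:obsY4})--(\ref{eq:obsV}) is already in this form while (\ref{eq:obsY1})--(\ref{eq:obsY3}) must first absorb the lumped perturbation into the auxiliary state. Where you genuinely diverge is in closing the adaptation loop. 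The paper augments the Lyapunov function with the gain-mismatch terms $\tfrac{1}{2\varGamma_1}(L_1-L_1^{*})^2+\tfrac{1}{2\varGamma_2}(L_2-L_2^{*})^2$ (the Shtessel-style construction), adds and subtracts $\tfrac{\alpha_j}{\sqrt{2\varGamma_j}}|\tilde L_j|$ to obtain $\dot V\le-\gamma\sqrt{V}+(\cdot)$, and then splits into the two regimes $|\sigma|>\epsilon$ (where the residual vanishes and $T_r\le 2\sqrt{V(t_0)}/\gamma$ follows) and $|\sigma|<\epsilon$ (where $\dot V$ is sign-indefinite and only confinement to a larger layer $\bar\epsilon$ is claimed). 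You instead keep $V$ purely quadratic in $\zeta$ and argue that, while $|\sigma_i|>\epsilon_i$, the monotone growth $\dot L_1=\alpha_1\sqrt{\varGamma_1/2}>0$ forces $L_1$ past the unknown threshold (\ref{eq:condition}) in finite time, after which the standard non-adaptive estimate applies. Both routes are sound; the paper's buys a single reaching-time bound expressed in the augmented $V$ (including gain mismatch) at the cost of the case analysis, whereas yours is more elementary but needs one supplementary step you did not spell out: boundedness of the error trajectories during the transient in which $L_1$ is still below the threshold and $\dot V_i$ may be positive (here the BIBS property of the physical system and the boundedness of the perturbations in (\ref{eq:lips}) and (\ref{eq:bound}) supply this). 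Your two flagged obstacles are real but benign: $P$ in the paper is constant (only $\Lambda$ depends on $L_1(t)$), so no $\dot L_1$-term enters $\dot V_1$ itself, and the Schur-complement computation does reproduce (\ref{eq:condition}) uniformly over the index $i$ because the vanishing coupling through $e_{z_1}$ is simply majorised inside ${}^{1}\delta_i$, ${}^{2}\delta_i$.
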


\begin{proof}
	Consider the system (\ref{eq:obsY1}), it can be rewritten as:
	\begin{align}\label{eq:STWdyn}
	& \dot{e}_{y_1}=-L_1(t) |e_{y_1}|^{\frac{1}{2}}sign(e_{y_1})+\chi + {^1 \varrho_1} \nonumber \\  & \dot{\chi}=-L_2(t)sign(e_{y_1})+{^2 \varrho_1}
	\end{align}
	where $ |{^1 \varrho_1}| \leq {^1 \delta_1} $ and $ |{^2 \varrho_1}| \leq {^2 \delta_1} $ under conditions (\ref{eq:lips}) and (\ref{eq:bound}). \\
	It can be observed that Eq. (\ref{eq:STWdyn}) has the same form as the second-order system described by Eqs. (\ref{eq:obsY4}-\ref{eq:obsV}). Furthermore, Eqs. (\ref{eq:obsY2})  and (\ref{eq:obsY3}) can also be rewritten to this form. Therefore, we only focus on the stability proof of Eq. (\ref{eq:obsY1}). The same procedure can be followed for systems (\ref{eq:obsY2}), (\ref{eq:obsY3}) and (\ref{eq:obsY4}-\ref{eq:obsV}).
	In order to conduct a more convenient Lyapunov analysis, a new state vector is introduced as

	\begin{equation}\label{eq:newstate}
	\zeta=
	\begin{bmatrix}
	\zeta_1  &
	\zeta_2
	\end{bmatrix}^T =
	\begin{bmatrix}
	|e_{y_1}|^{\frac{1}{2}}sign(e_{y_1})  &
	\chi
	\end{bmatrix}^T
	\end{equation}
	Then, the system (\ref{eq:STWdyn}) can be rewritten as:
	\begin{equation}\label{eq:newdynsys}
	\begin{array}{l}
	\vspace{3pt}\dot{\zeta}= \frac{1}{|\zeta_1|}
	\left[
	\begin{array}{cc}
	-\frac{L_1}{2} & \frac{1}{2}  \\
	-L_2 & 0
	\end{array}
	\right] \zeta
	+
	\frac{1}{|\zeta_1|}
	\left[
	\begin{array}{c}
	{^1 \varrho_1} \\
	{^2 \varrho_1}|\zeta_1| \\
	\end{array}
	\right]
	=
	\frac{1}{|\zeta_1|}
	\underbrace{\left[
		\begin{array}{cc}
		-\frac{L_1}{2}+{^1 \delta_1} & \frac{1}{2}  \\
		-L_2+{^2 \delta_1} & 0
		\end{array}
		\right]}_\text{$ \Lambda $}
	\zeta
	\end{array}
	\end{equation}
	it can be observed that $ \sigma = e_{y_1},\dot{\sigma}=\dot{e}_{y_1} \rightarrow 0 $ in finite time if $ \zeta_1,\zeta_2 \rightarrow 0 $ in finite time. To this end, the following Lyapunov function candidate is introduced:
	\begin{equation}\label{eq:lyap}
	V=V_1+\frac{1}{2\varGamma_1}\left(L_1-L^*_1\right)^2+\frac{1}{2\varGamma_2}\left(L_2-L^*_2\right)^2
	\end{equation}
	where
	\begin{equation}
	V_1=\zeta^TP\zeta ,\hspace{5pt} P= \begin{bmatrix}
	4\lambda_1^2+2\lambda_2  & -2\lambda_1 \\
	-2\lambda_1 & 1
	\end{bmatrix}
	\end{equation}
	and $ L^*_1 $, $ L^*_2 $ are positive constants that exist so that $ L_1-L^*_1 \leq 0 $ and $ L_2-L^*_2 \leq 0 $, $ \forall t \geq 0 $. It is necessary to note that the adaptive gains $ L_1 $ and $ L_2 $ are bounded (its proof will be given later). It can be observed that the matrix $ P $ is symmetric and positive definite if $ \lambda_1 $, $ \lambda_2 $ have any real and positive value.\\
	The time derivative of $ V $ along the system (\ref{eq:newdynsys}) is given by:
	\begin{equation}\label{eq:vdot}
	\dot{V}=\frac{1}{|\zeta_1|}\zeta^T\underbrace{\left(P\Lambda+\Lambda^TP\right)}_\text{$ -\Omega $}\zeta+\frac{1}{\varGamma_1}\tilde{L}_1\dot{L}_1+\frac{1}{\varGamma_2}\tilde{L}_2\dot{L}_2
	\end{equation}
	in which $ \tilde{L}_1 = L_1-L^*_1 $, $ \tilde{L}_2 = L_2-L^*_2 $ and\\
	\begin{equation}
	\Omega=\begin{bmatrix}
	2(4\lambda_1^2+2\lambda_2)(\frac{L_1}{2}-{^1 \delta_1})-4\lambda_1(L_2-{^2 \delta_1})  & \star \\
	L_2-\lambda_2-{^2 \delta_1}-2\lambda_1(\frac{L_1}{2}-{^1 \delta_1})-2\lambda_1^2 & 2\lambda_1
	\end{bmatrix}
	\end{equation}
	where the symbol $ \star $ indicates a symmetric element.\\
	The first term of Eq. (\ref{eq:vdot}) (which represents $ \dot{V}_1 $) is negative definite if $ \Omega = \Omega^T >0 $. It is easy to show that the matrix $ \Omega $ will be positive definite if the gains $ L_1 $ and $ L_2 $ satisfy the following conditions
	\begin{flalign}\label{eq:cond1}
	& L_2 = \lambda_1L_1 \\ \label{eq:cond2}
	& L_1 > \frac{\left[2\lambda_1(\lambda_1-{^1 \delta_1})+(\lambda_2+{^2 \delta_i})\right]^2}{4\lambda_1\lambda_2}+\frac{4\lambda_1(2\lambda_1-{^2 \delta_1})+\lambda_2{^1 \delta_1}}{2\lambda_2}
	\end{flalign}
	assuming that (\ref{eq:cond1}) and (\ref{eq:cond2}) hold, it implies
	\begin{equation}\label{eq:vdot2}
	\dot{V}\leq -\frac{1}{|\zeta_1|}\lambda_{min}\left(\Omega\right)\|\zeta\|_2^2+\frac{1}{\varGamma_1}\tilde{L}_1\dot{L}_1+\frac{1}{\varGamma_2}\tilde{L}_2\dot{L}_2
	\end{equation}
	Considering the following facts
	\begin{align}
	& \lambda_{min}(P)\|\zeta\|_2^2\leq\zeta^TP\zeta\leq\lambda_{max}(P)\|\zeta\|_2^2\\
	& |\zeta_1|=|e_{y_1}|^{\frac{1}{2}}\leq\|\zeta\|_2\leq\frac{V_1^{\frac{1}{2}}}{\lambda_{min}^{\frac{1}{2}}(P)}
	\end{align}
	Then, the inequality (\ref{eq:vdot2}) can be rewritten as
	\begin{equation}\label{eq:39}
	\begin{array}{l}
	\dot{V}\leq -c_1V_1^{\frac{1}{2}}+\frac{1}{\varGamma_1}\tilde{L}_1\dot{L}_1+\frac{1}{\varGamma_2}\tilde{L}_2\dot{L}_2\\
	\hspace{12pt}=\underbrace{-c_1V_1^{\frac{1}{2}}-\frac{\alpha_1}{\sqrt{2\varGamma_1}}|\tilde{L}_1|-\frac{\alpha_2}{\sqrt{2\varGamma_2}}|\tilde{L}_2|}_\Xi+\frac{1}{\varGamma_1}\tilde{L}_1\dot{L}_1\\
	\hspace{23pt}+\frac{\alpha_1}{\sqrt{2\varGamma_1}}|\tilde{L}_1|+\frac{1}{\varGamma_2}\tilde{L}_2\dot{L}_2+\frac{\alpha_2}{\sqrt{2\varGamma_2}}|\tilde{L}_2|
	\end{array}
	\end{equation}
	where $ c_1 = \frac{\lambda_{min}^{\frac{1}{2}}(P)}{\lambda_{max}(P)}\lambda_{min}\left(\Omega\right) $, and it is obvious that term $ \Xi $ is negative definite.
	\\Taking into account the well-known inequality $ \left(\sum_{i=1}^3z_i^2\right)^\frac{1}{2} \leq \ \sum_{i=1}^3|z_i| $, and the Lyapunov function (\ref{eq:lyap}), the term $ \Xi $ can be rewritten as:
	\begin{eqnarray}\label{eq:40}
	\Xi = -c_1V_1^{\frac{1}{2}}-\frac{\alpha_1}{\sqrt{2\varGamma_1}}|\tilde{L}_1|-\frac{\alpha_2}{\sqrt{2\varGamma_2}}|\tilde{L}_2|\leq -\gamma\sqrt{V}
	\end{eqnarray}
	with $ \gamma=min\{c_1,\alpha_1,\alpha_2\} $. Substituting (\ref{eq:40}) into Eq. (\ref{eq:39}), yields:
	\begin{equation}\label{eq:41}
	\begin{array}{l}
	\dot{V}\leq -\gamma D^+\sqrt{V}+\frac{1}{\varGamma_1}\tilde{L}_1\dot{L}_1\\
	\hspace{23pt}+\frac{\alpha_1}{\sqrt{2\varGamma_1}}|\tilde{L}_1|+\frac{1}{\varGamma_2}\tilde{L}_2\dot{L}_2+\frac{\alpha_2}{\sqrt{2\varGamma_2}}|\tilde{L}_2|\\
	\hspace{12pt}\vspace{3pt}=-\gamma D^+\sqrt{V}-|\tilde{L}_1|\left(\frac{1}{\varGamma_1}\dot{L}_1-\frac{\alpha_1}{\sqrt{2\varGamma_1}}\right)-|\tilde{L}_2|\left(\frac{1}{\varGamma_2}\dot{L}_2-\frac{\alpha_2}{\sqrt{2\varGamma_2}}\right)
	\end{array}
	\end{equation}
	where $ D^+ $ indicates the upper right-hand derivative of $ \sqrt{V} $ with respect to its argument. This notation is used to denote the positive root of $ \sqrt{V} $ in a more convenient notation \cite{khalil1996noninear}.\\
	Given the proposed adaptation law (\ref{eq:adapgain}), the Eq. (\ref{eq:41}) can be analyzed in two situations $ A1 $ and $ A2 $ where
	\begin{itemize}
		\item[\textbf{A1}.] $ \vspace{3pt} \left\{
		\begin{array}{l}
		|\sigma| > \epsilon  \\
		L_1 > \underline{L}
		\end{array}
		\right. \forall t \geq 0 $. then, in view of (\ref{eq:adapgain})
		\begin{equation}
		\dot{L}_1=\alpha_1\sqrt{\frac{\varGamma_1}{2}}
		\end{equation}
		Considering condition (\ref{eq:cond1}), its derivative yields:
		\begin{equation}
		\dot{L}_2=\alpha_2\sqrt{\frac{\varGamma_2}{2}}
		\end{equation}
		where $ \lambda_2  $ is chosen such that $ \lambda_2=\frac{\alpha_2}{\alpha_1}\sqrt{\frac{\varGamma_2}{\varGamma_1}} $.
		Then, it can be observed that Eq. (\ref{eq:41}) is reduced to the following:
		\begin{eqnarray}\label{eq:vdotT}
		\dot{V}\leq -\gamma D^+\sqrt{V}
		\end{eqnarray}
		
		\item[\textbf{A2}.] $ |\sigma| < \epsilon $, then  $ L_1(t) $ enters into decreasing situation and (\ref{eq:adapgain}) becomes:
		\begin{equation}\label{eq:45}
		\dot{L}_1 =
		\begin{cases}
		-\alpha_1\sqrt{\frac{\varGamma_1}{2}} & L_1 > \underline{L} \\
		\bar{L}       &  L_1 \leq \underline{L}
		\end{cases}
		\end{equation}
		In view of Eq. (\ref{eq:45}), Eq. (\ref{eq:41}) can be rewritten as:
		\begin{equation}\label{eq:vdot3}
		\dot {V}\leq-\gamma D^+\sqrt{V}+\varepsilon
		\end{equation}
		where
		\begin{equation}\label{eq:47}
		\varepsilon =
		\begin{cases} \vspace{5pt}
		2|\tilde{L}_1|\frac{\alpha_1}{\sqrt{2\varGamma_1}}+2|\tilde{L}_2|\frac{\alpha_2}{\sqrt{2\varGamma_2}} & L_1 > \underline{L} \\
		-|\tilde{L}_1|\left(\frac{\bar{L}}{\varGamma_1}-\frac{\alpha_1}{\sqrt{2\varGamma_1}}\right)\\
		\hspace{30pt}-|\tilde{L}_2|\left(\frac{\lambda_1\bar{L}}{\varGamma_2}-\frac{\alpha_2}{\sqrt{2\varGamma_2}}\right)       &  L_1 \leq \underline{L}
		\end{cases}
		\end{equation}
		It can be observed that $ \varepsilon  $ can become positive. Note that the second equation of (\ref{eq:47}) is valid only for a finite time interval, since if $ L_1 \leq \underline{L} $, its value immediately starts increasing in accordance with $ L_1 = \bar{L}t + \underline{L} $, Then the first equation in (\ref{eq:47}) will be established. According to the description given for (\ref{eq:47}), Eq. (\ref{eq:41}) can become sign indefinite and $ |\sigma(\cdot)| $ may become greater than $ \epsilon $ due to reduce of the ASTW observer gains $ L_1(t) $ and $ L_2(t) $.
	\end{itemize}
	
	As soon as the magnitude of sliding variable $ |\sigma(\cdot)| $ becomes larger than $ \epsilon $ the defined situation $ A1 $ holds so that $ \sigma(\cdot) $ reaches the domain $ |\sigma(\cdot)|<\epsilon $ again in a finite time. Consequently, during the presented adaptation process the sliding variable $ \sigma(\cdot) $ first reaches the domain $ |\sigma(\cdot)|<\epsilon $, then may leave it, both in a finite time. This behavior may occur repeatedly around the sliding surface and it is guaranteed that the sliding variable $ \sigma(\cdot) $ always stays in a greater domain $ |\sigma(\cdot)|<\bar{\epsilon} , \epsilon<\bar{\epsilon} $ in the sliding mode strategy.
	
	Considering the comparison principle and the inequality (\ref{eq:vdotT}), it implies $ \zeta_1,\zeta_2 \rightarrow 0 $ in finite time $ T_r\leq\frac{2\sqrt{V(t_0)}}{\gamma} $ \cite{shtessel2012novel}.\\
	Theorem \ref{Theorem1} is proven.
\end{proof}

Now the stability of ASTW observer (\ref{eq:stwObs}-\ref{eq:output_inj}) is proved. The boundedness of adaptive gains is considered in the following proposition.
\begin{proposition}\label{Prop1}
	Given the adaptation law (\ref{eq:adapgain}), adaptive gains $ L_1(t) $ and $ L_2(t) $ are bounded , $ \forall t \geq 0 $
\end{proposition}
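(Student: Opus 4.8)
The plan is to reduce Proposition~\ref{Prop1} to the boundedness of the single gain $L_1(t)$. Since $L_2=\lambda_1 L_1$ by (\ref{eq:cond1}) with $\lambda_1>0$ a fixed design constant, $L_2$ is bounded exactly when $L_1$ is, so I would track only $L_1$. The first fact to record is that the right-hand side of the adaptation law (\ref{eq:adapgain}) is bounded in magnitude by $\max\{\alpha_1\sqrt{\varGamma_1/2},\,\bar{L}\}$; hence $L_1(\cdot)$ is continuous and varies at a finite rate, so it can neither jump to infinity nor accumulate an unbounded value over a finite time. Growth of $L_1$ can occur \emph{only} in situation $A1$, i.e.\ on the set of times where $|\sigma|>\epsilon$.

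Next I would exhibit a finite ceiling. Because the perturbation bounds ${}^1\delta_i,{}^2\delta_i$ are finite constants (Assumption~\ref{assum1}) and $\lambda_1,\lambda_2$ are fixed, the right-hand side of condition (\ref{eq:condition})/(\ref{eq:cond2}) is a finite number $L_1^*$. Whenever $L_1>L_1^*$ the matrix $\Omega$ is positive definite, so the first term of (\ref{eq:vdot}), which represents $\dot V_1$, satisfies $\dot V_1\le -c_1 V_1^{1/2}<0$; consequently $V_1$, and with it $\zeta$ and $\sigma=e_{y_1}$, is driven into the domain $|\sigma|<\epsilon$ within a finite reaching time. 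Once inside, the adaptation switches to situation $A2$, where $\dot L_1=-\alpha_1\sqrt{\varGamma_1/2}<0$, so $L_1$ \emph{decreases}. Thus $L_1^*$ acts as a reflecting level: every excursion of $L_1$ above $L_1^*$ forces the sliding variable into $|\sigma|<\epsilon$ and thereby reverses the gain's growth.

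The proof then closes by a capping argument: if $L_1$ were unbounded it would have to exceed $L_1^*$ by an arbitrarily large amount, but the mechanism above shows that as soon as $L_1$ crosses $L_1^*$ the sliding variable re-enters $|\sigma|<\epsilon$ and situation $A2$ sets in, reversing the growth. I expect the only genuinely delicate point to be quantifying the overshoot accrued \emph{during} one reaching phase: $L_1$ keeps increasing at rate $\alpha_1\sqrt{\varGamma_1/2}$ for the whole reaching time $T_r$, so the excess above $L_1^*$ is $\alpha_1\sqrt{\varGamma_1/2}\,T_r$, and one must verify that $T_r$ stays uniformly bounded rather than growing with $L_1$. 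I would settle this by noting that $T_r\le 2V_1^{1/2}/c_1$ is controlled by $V_1=\zeta^T P\zeta$ evaluated at the instant $L_1$ crosses $L_1^*$, and that $V_1$ depends only on $e_{y_1}$ and $\chi$, not on $L_1$; since the plant and observer are bounded-input bounded-state on the physical operating region, these quantities are uniformly bounded, hence $T_r\le\bar T_r$ for a constant $\bar T_r$. Therefore $L_1(t)\le L_1^*+\alpha_1\sqrt{\varGamma_1/2}\,\bar T_r$ for all $t\ge 0$, and $L_2=\lambda_1 L_1$ inherits the same bound, completing the proof.
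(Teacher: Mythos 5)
Your proposal is correct and rests on the same basic mechanism as the paper's proof --- the gain can only grow while $|\sigma|>\epsilon$, such a growth phase lasts only a finite reaching time, and inside $|\sigma|\le\epsilon$ the gain decreases towards the floor $\underline{L}$ --- but you close the argument differently and, in fact, more completely. The paper's proof writes the explicit linear solution $L_1(t)=\alpha_1\sqrt{\varGamma_1/2}\,t+L_1(0)$ on a single reaching interval $[0,T_r]$ and declares boundedness because $T_r$ is finite; it does not explicitly address the fact (acknowledged in the proof of Theorem~\ref{Theorem1}) that the trajectory may leave and re-enter the band $|\sigma|<\epsilon$ repeatedly, so that growth phases recur indefinitely. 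Your reflecting-level argument around the threshold $L_1^*$ from (\ref{eq:cond2}), combined with a uniform bound $\bar T_r$ on the per-excursion overshoot, yields an explicit global ceiling $L_1^*+\alpha_1\sqrt{\varGamma_1/2}\,\bar T_r$ valid for all $t\ge 0$, which is what the proposition actually asserts; it also avoids the mild circularity of invoking a reaching time that was itself derived under the assumption $L_1\le L_1^*$. The one point you should tighten is the uniformity of $\bar T_r$: you control it via $V_1=\zeta^T P\zeta$ at the crossing instant, but $\zeta_2=\chi$ is the integral state of the injection and can in principle scale with $L_2$, so the claim that $V_1$ "does not depend on $L_1$" needs either the observation that $\lambda_{\min}(\Omega)$, and hence $c_1$, grows with $L_1$ (shortening the reaching time), or the same bounded-input bounded-state appeal the paper itself makes in the remark following Assumption~\ref{assum1}. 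With that caveat, your route is sound and arguably strengthens the published proof.
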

\begin{proof}
	Inside the domain $ \epsilon < |\sigma(\cdot)| \leq \bar{\epsilon} $, a solution to (\ref{eq:adapgain}) can be obtained as
	\begin{equation}\label{eq:boundedg}
	L_1(t)=\alpha_1\sqrt{\frac{\varGamma_1}{2}}t+L_1(0),\quad  0\leq t \leq T_r
	\end{equation}
	Since $ L_2=\lambda_1L_1 $ and in view of Eq. (\ref{eq:boundedg}), adaptive gains $ L_1(t) $ and $ L_2(t) $ are bounded. Inside the domain $ |\sigma(\cdot)| \leq \epsilon $, adaptive gains are decreasing and have a lower bound. Consequently, the observer gains are bounded in all over \emph{the region of attraction} ($ -\bar{\epsilon} < \sigma(\cdot) \leq \bar{\epsilon} $).\\
	Proposition \ref{Prop1} is proven.
\end{proof}
\begin{remark}
	It is necessary to note that, tuning the parameter $ \epsilon $ is an important factor in the performance of the proposed observer. In practical applications, this parameter tuning is affected by the amplitude of Lebesgue-measurable noise that exists in signals. A too small value for $ \epsilon $ lower than the noise amplitude yields system trajectories never reach to a lower value than $ \epsilon $. Therefore, in such situation and in view of (\ref{eq:adapgain}) the observer gains never decrease which induces large oscillation and parameter drift. If parameter $ \epsilon $ is too large, system trajectories reach to a vicinity of the sliding surface around $ \epsilon $. This situation induces a state estimation with low accuracy. Consequently, a suitable value $ \epsilon $ should be chosen associated with the consideration of implementation (i.e. the amplitude of noise and the desired accuracy for estimation). 
\end{remark}
\begin{remark}
	Parameter $ \bar{\epsilon} $ represents a finite region of attraction. In order to establish sliding motion and maintain it in the presence of system perturbations, the region of attraction should be large enough that yields conservatism of observer gains. In classic sliding mode technique, the size of the region of attraction can be estimated. Although in this paper the size of the region of attraction ($ \bar{\epsilon} $) dynamically increases or decreases as well as adaptive gains according to the bound of system perturbations. Therefore the estimation of $ \bar{\epsilon} $ is practically impossible.
\end{remark}
\section{Fault reconstruction}\label{s:4}
In this section, we use the so-called equivalent output injection concept \cite{edwards2000sliding} in order to reconstruct system faults (cylinder leakages). It is assumed that Eqs. (\ref{eq:lips}-\ref{eq:condition}) hold. It is guaranteed that during the sliding motion the sliding variable and its $r-1$ consecutive derivatives are maintained at zero in an $ r $th-order sliding mode. Therfore, when sliding motion is established:

\begin{equation}\label{eq:eq}
e_{y_i}=\dot{e}_{y_i}=0 \quad \text{for}\quad i=1,2,3 \quad \text{and} \quad e_{y_4}=e_{z_2}=0
\end{equation}

in view of Eqs. (\ref{eq:obsY1}) and (\ref{eq:obsY2}), and considering (\ref{eq:eq}) the equivalent output injection can be obtained as:
\begin{equation}\label{eq:faultrec}
0=\tilde{G}_i(y,0,e_{z_1}) +\phi_i(y)f_i-\mu_{eq}(y_i,\hat{y}_i) ,\hspace{5pt} i=1,2
\end{equation}
As mentioned earlier, $ \lim_{t\rightarrow\infty} e_{z_1}(t) = 0 $ and since $ \phi_i(y), i=1,2 $ are nonsingular, then the estimation of faults can be constructed as:
\begin{equation}\label{eq:faultreco}
\hat{f}_i=\phi_i^{-1}(y)\mu_{eq}(y_i,\hat{y}_i) ,\hspace{5pt} i=1,2
\end{equation}
in which $ \mu_{eq}(\cdot,\cdot) $ denotes the equivalent output-injection signal.
\begin{theorem}\label{theorem2}
	Assume that all of the conditions in Theorem \ref{Theorem1} hold. Then, during sliding motion, a precise reconstruction of faults $ f_1 $ and $ f_2 $ can be obtained through Eq. (\ref{eq:faultreco}).
\end{theorem}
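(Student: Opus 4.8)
The plan is to invoke Theorem~\ref{Theorem1} directly and then read off the fault from the equivalent output-injection relation (\ref{eq:faultrec}), inverting the nonsingular gain $\phi_i(y)$. First I would use Theorem~\ref{Theorem1} to guarantee a finite reaching time $t_f>0$ beyond which the trajectories lie on the sliding surface (\ref{eq:slidingsurf}), so that $e_{y_i}=\dot{e}_{y_i}=0$ for $i=1,2$ and $e_{z_2}=0$, exactly as recorded in (\ref{eq:eq}). Substituting $\dot{e}_{y_i}=0$ into the error dynamics (\ref{eq:obsY1})--(\ref{eq:obsY2}) collapses each equation to the algebraic balance (\ref{eq:faultrec}), in which the discontinuous injection is replaced by its \emph{equivalent} value $\mu_{eq}(y_i,\hat{y}_i)$, i.e. the continuous signal that must be supplied on average to sustain $\dot{e}_{y_i}=0$ against the perturbation $\phi_i(y)f_i$.

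Next I would show that the residual nonlinear term $\tilde{G}_i(y,0,e_{z_1})$ vanishes. Appealing to the global bound (\ref{eq:lips}), on the surface the contribution $A_i\bar{\phi}_i|e_{z_2}|$ is exactly zero because $e_{z_2}=0$, while the remaining piece is controlled by $|e_{z_1}\sqrt{\Delta P_i}|$. Since the scalar subsystem (\ref{eq:obsE1}) is decoupled from the rest and exponentially stable, $e_{z_1}(t)\to 0$, and with $\sqrt{\Delta P_i}$ physically bounded this forces $\tilde{G}_i\to 0$. Hence (\ref{eq:faultrec}) reduces to $\phi_i(y)f_i=\mu_{eq}(y_i,\hat{y}_i)$.

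Finally, since $\phi_i(y)=\beta/(V_{0i}\pm A_i y_4)$ is strictly positive and therefore nonsingular over the physical operating range, I would pre-multiply by $\phi_i^{-1}(y)$ to obtain $f_i=\phi_i^{-1}(y)\mu_{eq}(y_i,\hat{y}_i)=\hat{f}_i$, which is precisely (\ref{eq:faultreco}). The main obstacle will be the rigorous handling of $\mu_{eq}$: the physical injection term still carries the switching component $L_2\int sign(\sigma)\,d\tau$, so identifying its equivalent value with the smooth signal that reconstructs $\phi_i(y)f_i$ requires the equivalent-injection (averaging / low-pass filtering) argument rather than a pointwise identity. A related subtlety worth flagging is that $e_{z_1}\to 0$ holds as $t\to\infty$ rather than in finite time, so strictly speaking the reconstruction becomes exact asymptotically, with the residual $\tilde{G}_i$ decaying exponentially after $t_f$.
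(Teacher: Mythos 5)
Your proposal is correct and follows essentially the same route as the paper: substitute the sliding-mode conditions (\ref{eq:eq}) into the error dynamics to obtain the equivalent-injection balance (\ref{eq:faultrec}), observe that $\tilde{G}_i$ reduces to the $e_{z_1}$-dependent part (bounded as in (\ref{eq:lips})) once $e_{z_2}=0$, and let the exponential decay of $e_{z_1}$ drive the fault-estimation error to zero after inverting the nonsingular $\phi_i(y)$. Your closing caveats --- that the identification of $\mu_{eq}$ requires the averaging/low-pass interpretation and that exactness is only asymptotic because $e_{z_1}\to 0$ as $t\to\infty$ --- are consistent with the paper, whose own proof is stated entirely in terms of $\lim_{t\to\infty}\|e_{f_i}\|=0$.
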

\begin{proof}
 Let the fault estimation error $ e_{f_i}=f_i(t)-\hat{f}_i(t) ,\quad i=1,2 $. It is evident that if the fault estimation error converge to zero, the precise fault reconstruction can be achieved. To this end, considering Eqs. (\ref{eq:lips}), (\ref{eq:faultrec}) and (\ref{eq:faultreco}), it follows that:
	\begin{equation}
	\begin{array}{l}
	\lim_{t\rightarrow\infty} \|e_{f_i}\| =
	\lim_{t\rightarrow\infty} \|f_i(t)-\hat{f}_i(t)\| =\\
	\lim_{t\rightarrow\infty} \|\phi_i^{-1}(y)\tilde{G}_i(0,e_{z_1})\| \leq \\
	C_d w\sqrt{\frac{2}{\rho}}\bar{\phi}_1 \lvert \sqrt{\Delta P_1} \rvert  \|\phi_i^{-1}(y)\| \lim_{t\rightarrow\infty}\|e_{z_1}\| = 0 ,\quad i=1,2
	\end{array}
	\end{equation}
	Theorem \ref{theorem2} is proven.
\end{proof}
\begin{remark}
			According to Eq. (11), terms $ \phi_1(y) = \frac{\beta}{V_{01}+A_1y_4}  $ , $ \phi_2(y) = \frac{\beta}{V_{02}-A_2y_4}  $ are physically positive and bounded on the overall region $ y \in \boldsymbol{{Y}} $, in which $ \boldsymbol{{Y}} $ is a compact set representing the system working area. Therefore, in domain $ \boldsymbol{{Y}} $, terms $ \phi_i(y), i=1,2  $ are nonsingular and their invertibility does not impose any strong condition. 
\end{remark}
\section{Experimental results and discussion}
\label{s:5}
Experimental tests were performed to assess the performance of the proposed ASTW observer for the objective of the fault reconstruction in electro-hydraulic servo systems.

\subsection{Setup description}\label{s5-1}

\begin{figure}[t]
	\centering
	\includegraphics[width=0.47\textwidth]{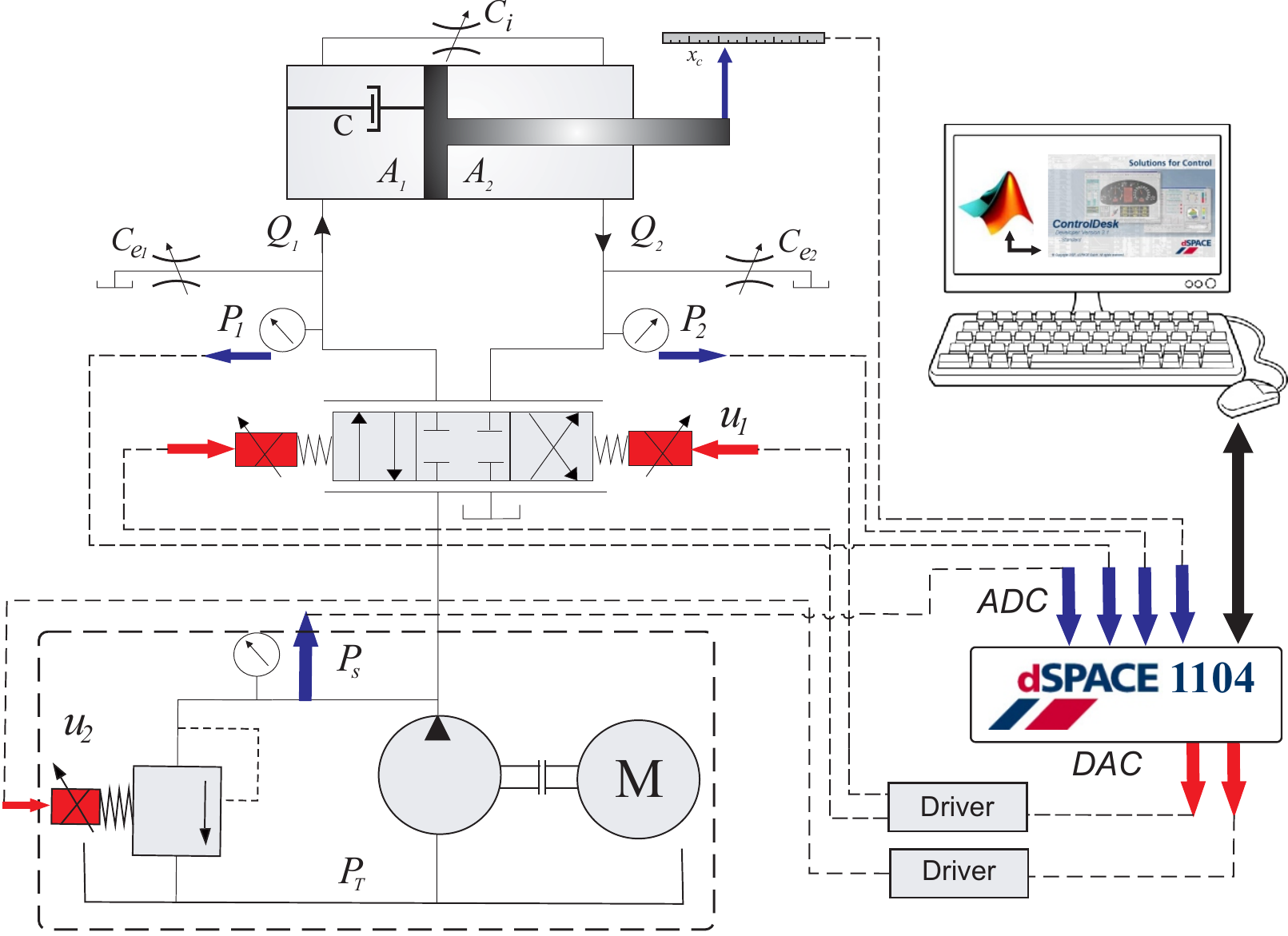}
	\caption{Schematic diagram of the hydraulic system.}\label{EHSSschema}
\end{figure}
\begin{figure}[t]
	\centering
	\includegraphics[width=0.47\textwidth]{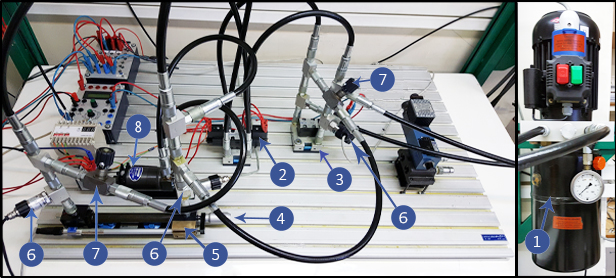}
	\caption{Picture of the experimental setup: 1 power unit, 2 proportional directional control valve, 3 proportional relief valve, 4 double-acting actuator, 5 position transducer, 6 pressure transducers, 7 ball valves, 8 flowmeter.}\label{testbench}
\end{figure}

The experimental set-up (see Figs. \ref{EHSSschema}-\ref{testbench}) is composed of a fixed-displacement pump unit \{1\} producing a constant flow $ Q_p = 2 Lit/min $, a proportional directional valve manufactured by FESTO (4/3 way-PDV 167086) \{2\} and a proportional relief valve manufactured by FESTO (PRV-FESTO 167087) \{3\} for flow and pressure control, respectively. A double acting cylinder \{4\} with a $ 0.2 m $ stroke measured by position transducer \{5\} and pressure transducers manufactured by HYDAC (HDA 4400) \{6\}. The supply pressure can be regulated up to $ P_s = 50bar $. The PDV receives control signal varying in the range of $u_1= -10 $ to $ +10 $ V and The PRV receives control signal varying in the range of $u_1= 0 $ to $ 5 $ V. As illustrated in Fig. \ref{EHSSschema}, system faults consist of both internal and external leakages. The internal leakage is produced artificially by bypassing the fluid across cylinder chambers, the external leakage is also produced by bypassing the fluid from each side of the cylinder to the reservoir. The severity of mentioned leakages is controlled via the adjustment of ball valves \{7\}. The flow rate of the internal leakage is measured by a positive-displacement flowmeter \{8\}. Finally, the nominal parameters of experimental set-up is presented in Table \ref{nominal parameters}.
 The observer algorithm is developed under \emph{Matlab-Simulink} environment and a dSPACE \emph{DS1104 R\&D Controller Board} with \emph{ControlDesk} user-interface is used for data acquisition and the real-time implementation of the proposed strategy. Control signals are sent by \emph{DAC} to the solenoid of EHSS valves and system outputs are received by \emph{ADC} from position and pressure transducers.\\
The sampling interval is set to $ T=1ms $.
\setlength{\tabcolsep}{6pt}
\begin{table}[h]
	\centering
	\caption{Nominal Parameters of experimental set-up}
	\label{nominal parameters}
	\begin{tabular}{ l l l}
		\hline
		Parameter & Description & Value\\
		\hline
		$ \beta $   & Effective bulk modulus & $ 1.05  $ GPa \\
		$ d_1 $ & Piston-side diameter & $ 16 $ mm \\
		$ d_2 $ & Rod-side diameter & $ 10 $ mm \\
		$ l $ & Piston stroke & $ 20 $ cm \\
		$ m $ & Equivalent mass & $ 0.15 $ Kg\\
		$ c $ & Equivalent damping & $ 350 $ N.s/m \\
		$ \rho $ & Fluid density & $ 845 $ Kg/$ \text{m}^3 $ \\
		$ C_d $ & PDV discharge coeff. & $ 0.7 $ \\
		$ K_v $ & PDV input gain & $ 1.13 \times 10^{-4} $ m/V \\
		$ \tau_v $ & PDV time constant & $ 0.07 $ sec \\
		$ K_r $ & PRV input gain & $ 1\times10^6 $ Pa/V \\
		$ \tau_s $ & PRV time constant & $ 0.05 $ sec \\
		\hline
	\end{tabular}
\end{table}
\subsection{Experimental results}\label{s5-2}

 The proposed strategy is evaluated in a closed-loop system in the presence of bounded leakage faults. In closed-loop tracking, a sinusoidal wave is generated as the desired position signal and the desired supply pressure is set to $ 30 bar $. The control inputs $ u_1 $ and $ u_2 $ are generated by a robust $ \mathscr{H_\infty} $ controller presented in our previous work \cite{bahrami2016energy}. In order to fulfill state estimation and fault reconstruction, the ASTW observer (\ref{eq:stwObs}) with gain adaptation law (\ref{eq:adapgain}) is implemented in the real-time environment of \emph{ControlDesk} and \emph{Matlab-Simulink}.

\begin{figure}[t]
	\centering
	\includegraphics[width=0.47\textwidth]{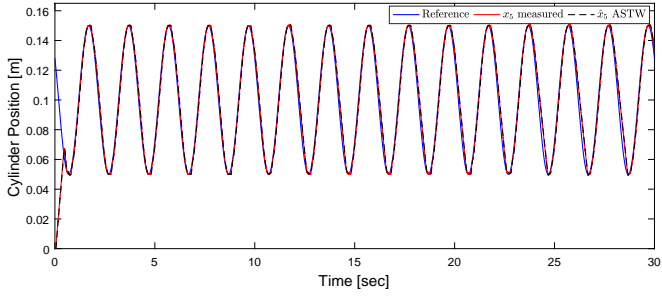}
	\caption{Actuator position and its estimation}\label{Pos}
\end{figure}
\begin{figure}[t]
	\centering
	\includegraphics[width=0.47\textwidth]{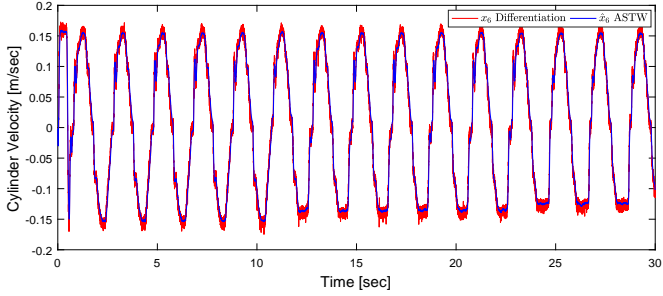}
	\caption{Actuator velocity and its estimation}\label{Vel}
\end{figure}
\begin{figure}[t]
	\centering
	\includegraphics[width=0.47\textwidth]{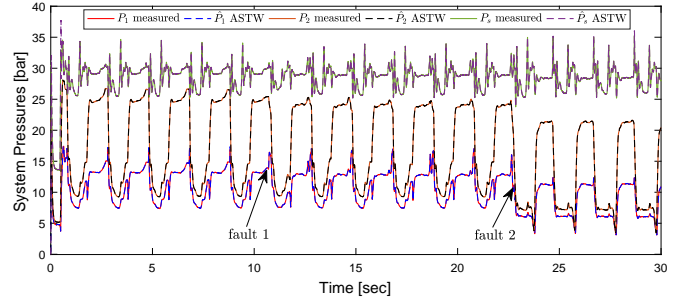}
	\caption{System pressures and their estimations with respect to the desired position}\label{Pressures}
\end{figure}
\begin{figure}[t]
	\centering
	\includegraphics[width=0.47\textwidth]{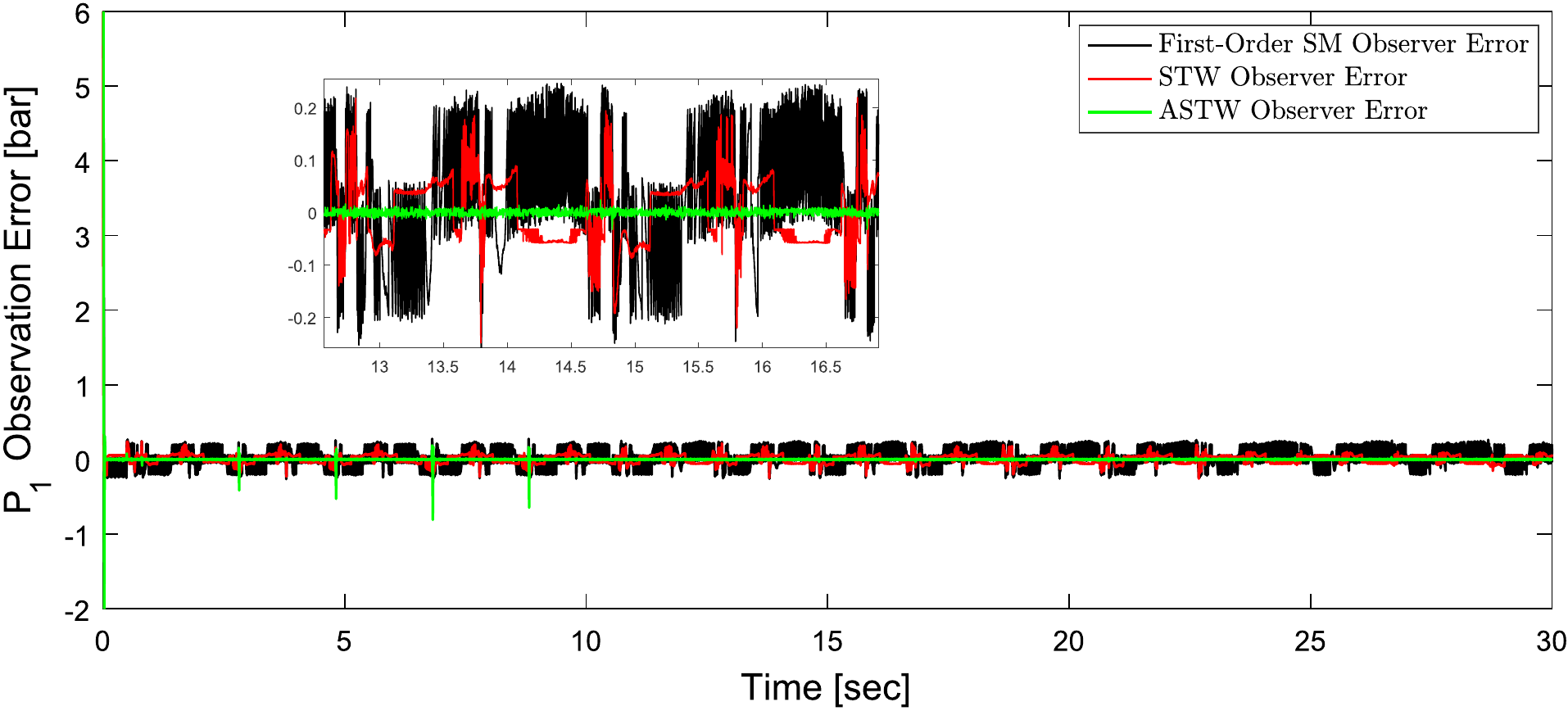}
	\caption{Estimation error of the piston-side pressure with different sliding mode techniques}\label{P1Error}
\end{figure}
\begin{figure}[t]
	\centering
	\includegraphics[width=0.47\textwidth]{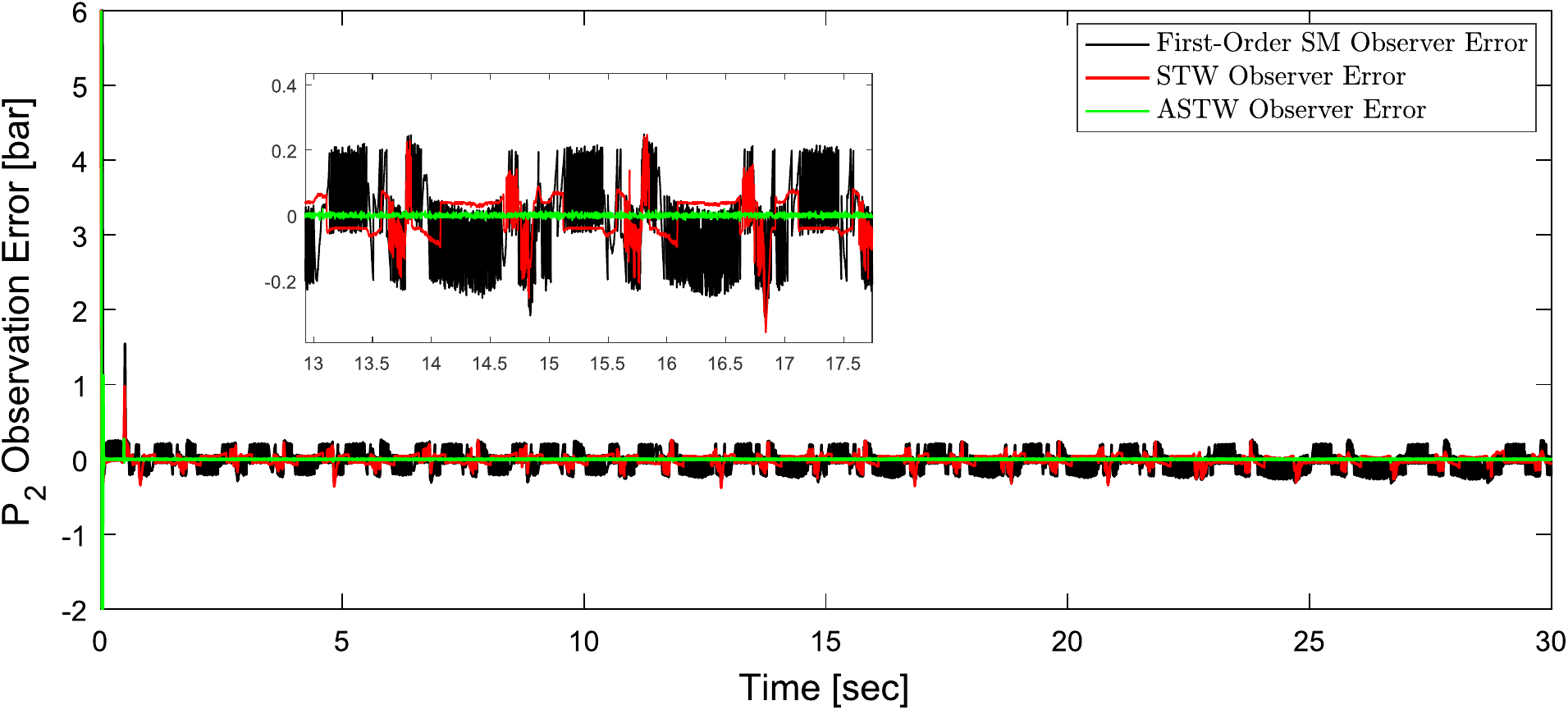}
	\caption{Estimation error of the rod-side pressure with different sliding mode techniques}\label{P2Error}
\end{figure}
\begin{figure}[h]
	\centering
	\includegraphics[width=0.47\textwidth]{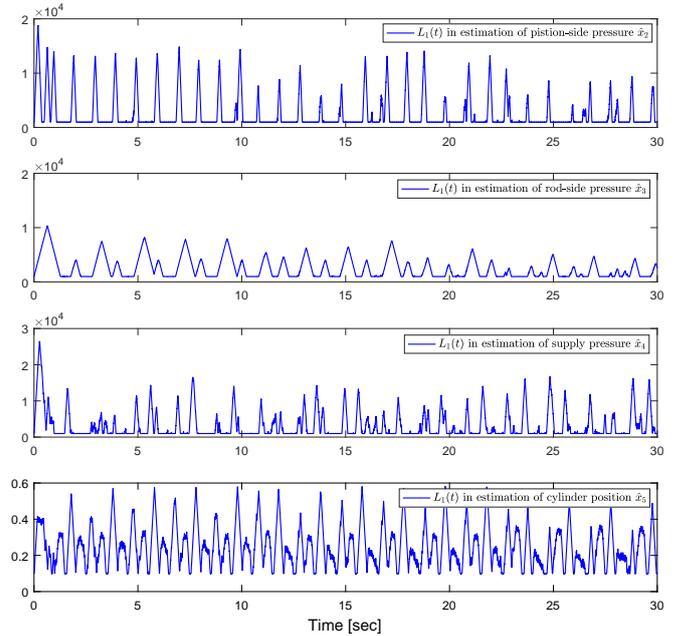}
	\caption{Adaptive gain $ L_1(t) $ in state estimation through ASTW observer }\label{adapgains}
\end{figure}
\begin{figure}[t]
	\centering
	\includegraphics[width=0.47\textwidth]{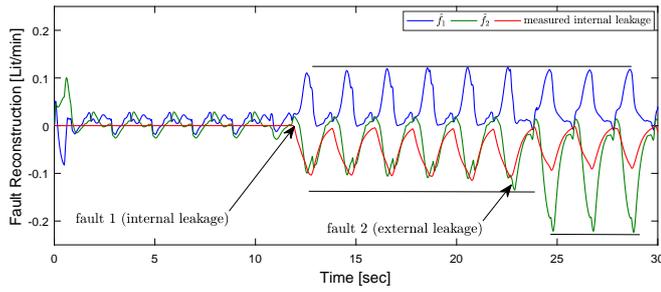}
	\caption{Leakage reconstruction}\label{mu}
\end{figure}
\begin{figure}[t]
	\centering
	\includegraphics[width=0.47\textwidth]{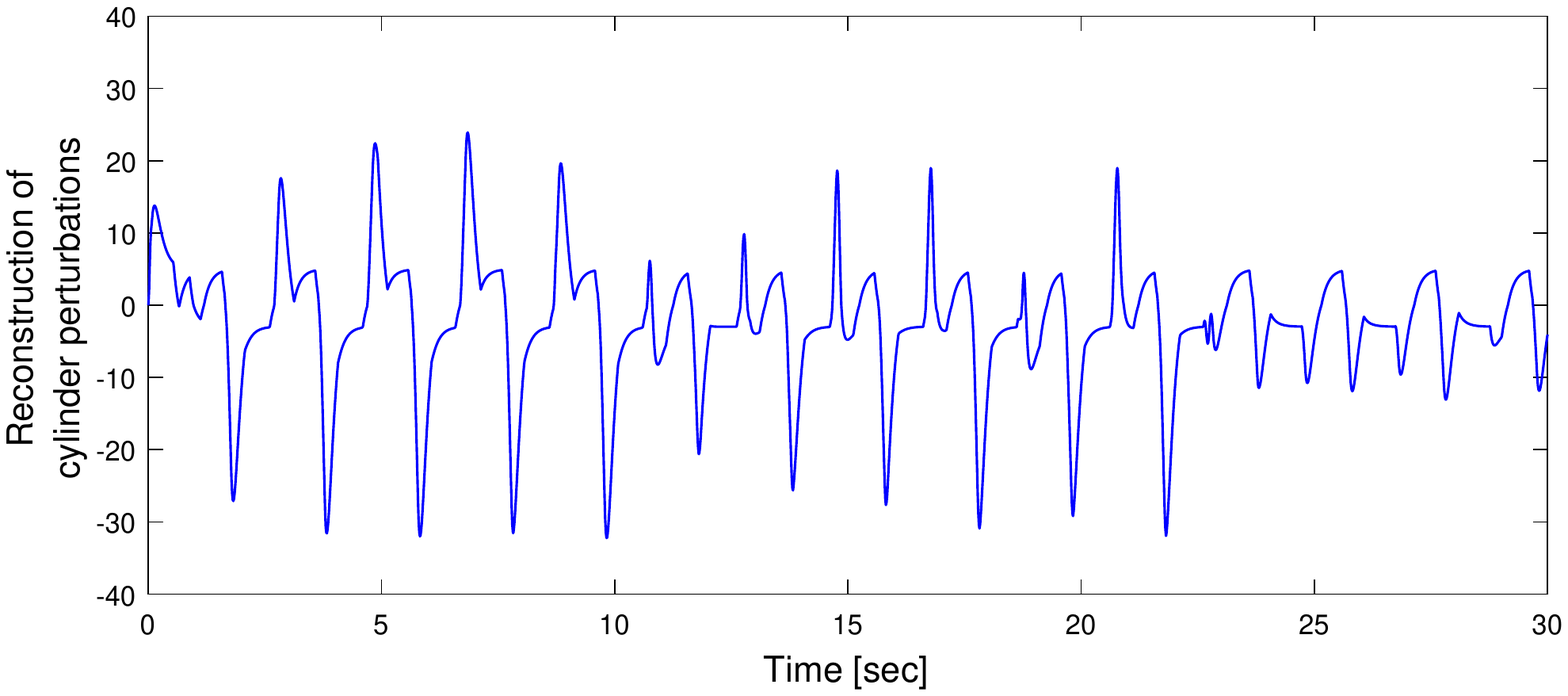}
	\caption{reconstruction of perturbations on cylinder including parametric uncertainties and friction}\label{mud}
\end{figure}

As mentioned earlier, two internal and external leakage faults were applied to the system through ball valves at different time instances (approximately at $ T_1\approx 12s $ and $ T_2\approx 23s $).\\
Figs. \ref{Pos}-\ref{Vel} present the cylinder position and its velocity, and their estimations. Clearly, the sliding motion is established in finite time and the ASTW observer has a good performance during the sliding motion in the presence of system faults. In Fig. \ref{Pressures}, system pressures and the observation of them are depicted. It is obvious that leakage faults make a pressure drop in the system, which can interrupt the actuator motion and closed-loop tracking. With this in mind, see Figs. \ref{Pos}-\ref{Vel} again. The effect of applied system faults is evident in the position and velocity of the backward movement of the cylinder. This is due to the fact that the cylinder moves based on the created pressure difference on it. Therefore, any faults (like leakage) affecting the system pressure also affects the cylinder movement, and conversely. 

In order to compare the effectiveness of the proposed ASTW algorithm with previous works \cite{edwards2000sliding,yan2007nonlinear,yan2008robust}, the observation error of the piston-side and rod-side pressure are obtained with First-Order Sliding Mode Observer, classic Super-twisting (STW) Observer and ASTW Observer. The results are presented in Figs. \ref{P1Error}-\ref{P2Error} . It can be observed that both chattering effect and observation error are highly reduced with the proposed ASTW observer. A more detailed comparison of observation errors for piston-and rod-side pressures are given in Tables \ref{Comparison P1} and \ref{Comparison P2}. In presented tables, performance indices are defined as follows: $ \lVert e_{y_i} \rVert_1 = \int_0^{30} |e_{y_i}(t)| dt $, $ \lVert e_{y_i} \rVert_2 = \sqrt{\int_0^{30} |e_{y_i}(t)| dt} $ and $ \lVert e_{y_i} \rVert_{\infty} = \sup_{t \in 10-30} |e_{y_i}(t)| $ , $ i=1,2.$ All of the performance indices are confirmed the effectiveness of the proposed methodology in presence of system nonidealities\footnote{In this case, due to the elimination of the effect of initial conditions, the infinity norm is defined on domain $ t = 10-30 seconds$ corresponding to the time interval of faults occurrence.}. This is due to the fact that observer gains increase and decrease in accordance with the effect of uncertainties/faults on the deviation of the sliding variable from the sliding surface (\ref{eq:slidingsurf}). For better interpretation, the adaptation of ASTW-observer gains is illustrated in Fig \ref{adapgains}. The gain adaptation enlarges the \emph{region of attraction} $ (\bar{\epsilon}) $ in the presence of perturbations and shrinks it in the absence of perturbations. Consequently, the conservatism of observer gains (gain overestimation) is eliminated and robust state estimation without chattering is obtained. These achievements are guaranteed during the sliding motion on the sliding surface (\ref{eq:slidingsurf}).

\setlength{\tabcolsep}{6pt}
\begin{table*}[t]
	\centering
	\caption{Comparison of $ P_1 $ observation errors}
	\label{Comparison P1}
	\begin{tabular}{ r c c c}
		\hline
		\textbf{Methodology} & $ \boldsymbol{\lVert e_{y_1}(t) \rVert_1} $ & $ \boldsymbol{\lVert e_{y_1}(t) \rVert}_2 $ & $ \boldsymbol{\lVert e_{y_1}(t) \rVert}_{\infty} $\\
		\hline
		First-Order SM Obs.   & $2.84 $ & $ 0.98 $ & $ 0.28 $ \\
		Super-twisting Obs. (STW) & $ 1.62 $ & $ 0.48 $ & $ 0.26 $ \\
		Adaptive Super-twisting Obs. (ASTW) & $ 0.25 $ & $ 0.83 $ & $ 0.03 $\\
		\hline
	\end{tabular}
\end{table*}
\setlength{\tabcolsep}{6pt}
\begin{table*}[t]
	\centering
	\caption{Comparison of $ P_2 $ observation errors}
	\label{Comparison P2}
	\begin{tabular}{ r c c c}
		\hline
		\textbf{Methodology} & $ \boldsymbol{\lVert e_{y_2}(t) \rVert_1} $ & $ \boldsymbol{\lVert e_{y_2}(t) \rVert}_2 $ & $ \boldsymbol{\lVert e_{y_2}(t) \rVert}_{\infty} $\\
		\hline
		First-Order SM Obs.   & $3.55 $ & $ 3.25 $ & $ 0.32 $ \\
		Super-twisting Obs. (STW) & $ 1.75 $ & $ 1.56 $ & $ 0.38 $ \\
		Adaptive Super-twisting Obs. (ASTW) & $ 0.57 $ & $ 1.38 $ & $ 0.04 $\\
		\hline
	\end{tabular}
\end{table*}

The estimation of leakage faults is shown in Fig. \ref{mu}. Two levels of the internal leakage \{$ C_{i}(P_2-P_1) $\} and external leakage \{$ C_{e2}(P_2-P_T) $\} are applied at the instants $ T\approx12s $ and $ T\approx23s $, respectively. The effect of applied leakage faults is evident in the amount of the equivalent output-error-injection signals ($ \mu(\cdot) $). Both signals $ \hat{f}_1 $ and $ \hat{f}_2 $ represent the amount of actuator leakages, which can be used in a fault-tolerant control strategy. More precisely, in this test, signal $ \hat{f}_1 $ denotes the amount of internal leakage \{$ C_{i}(P_2-P_1) $\} (leakage flow from the rod-side to the piston-side of the cylinder) after the instant $ T\approx12s $. Equally, signal $ \hat{f}_2 $ denotes the same leakage with opposite direction between the instants $ T\approx12s$ to $ 23s $. After applying the external leakage in the return line of the cylinder (instants $ T\approx23s$), signal $ \hat{f}_2 $ denotes the total amount of internal and external leakages \{$Q_{L2}=-C_{i}(P_2-P_1)-C_{e2}(P_2-P_T) $\} in the return line. The measured signal by flowmeter validates the reconstructed internal leakage.

Note that, the injection signals have a non-zero small value (fluctuating with zero-mean) before applying the leakages in the first 12 seconds. This is due to the hard nonlinear behavior of the EHSS such as the PDV dead-zone and other nonlinear uncertainties which are ignored in the system modeling and observer designing. Consequently, the fluctuating behavior of injection signals with zero-mean represents the bound of unmodeled dynamics and precise fault reconstruction can be achieved for any leakage faults bigger than this bound.
It is worth noting that the adverse effect of unmodeled dynamics on precise fault reconstruction can be eliminated through considering more knowledge about uncertainties and their bounds while designing the ASTW observer. It will be considered in a future study.

As mentioned in Section \ref{s:2}, using the proposed form (\ref{eq:nonlinearDyn2}) and ASTW observer (\ref{eq:stwObs}) provides the reconstruction of any matched and unmatched perturbations including uncertainties, disturbances, friction and faults that exist on cylinder dynamics. In this case, experimental tests are performed in a free-motion condition for the cylinder. Therefore, signal $ \mu_2(\sigma) $ in Fig. \ref{mud} represents the amount of matched perturbation including parametric uncertainties and the unmodeled friction of the cylinder (i.e. $ \varrho_4(e_{z_2},t) $ in Eq. \ref{eq:obsV}). 
\section{Conclusions}\label{s:6}
An adaptive-gain super-twisting observer is proposed for the aim of fault reconstruction in electro-hydraulic servo systems. The electro-hydraulic servo system receives bounded faults and uncertainties with unknown bounds. The proposed observer relaxes constraints on the distribution of system faults and uncertainties that needs to exist in such a way that fault reconstruction can be achieved. Furthermore, the proposed gain-adaptation algorithm takes into account system faults/uncertainties with unknown bounds leads to eliminate the overestimation of observer gains and drastically reduce the chattering effect. The precise leakage fault reconstruction was achieved. The performance of the proposed adaptive-gain super-twisting observer for fault reconstruction is confirmed by experimental tests on an electro-hydraulic servo system.
Future work will focus on extending the proposed methodology for simultaneous state and fault estimation to uncertain nonlinear systems.



\bibliography{ISATRA2729}

\end{document}